\newcommand\coin{\fullmoon}
\newcommand\lambprot{\ensuremath{\lambda_{\circ}}}
\newcommand\topr[1][1]{\ensuremath{\rightarrow_{#1}}}
\newcommand\ite[3]{\ensuremath{\mathsf{if~}#1\mathsf{~then~}#2\mathsf{~else~}#3}}
\newcommand{\FV}[1]{\ensuremath{\mathsf{FV}(#1)}}
\newtheorem{thm}{Theorem}[section]
\newtheorem{lem}[thm]{Lemma}
\theoremstyle{definition}
\newtheorem{defin}[thm]{Definition}
\title{A Note on Confluence\\ in Typed Probabilistic Lambda Calculi\thanks{This paper is part of the first author's masters thesis~\cite{Romero20}. Founded by PICT-2019-1272, STIC-AmSud 21STIC10 Qapla', ECOS-Sud A17C03 QuCa, and PIP 11220200100368CO.}}
\author{Rafael Romero
  \institute{
    CONICET-Universidad de Buenos Aires.\\
    Instituto de Ciencias de la Computaci\'on.\\
  Buenos Aires, Argentina.}
  \email{lromero@icc.fcen.uba.ar}
  \and
  Alejandro D\'iaz-Caro
  \institute{Departamento de Ciencia y Tecnolog\'ia.\\
    Universidad Nacional de Quilmes.\\
  Bernal, Buenos Aires, Argentina.}
  \institute{CONICET-Universidad de Buenos Aires.\\ 
    Instituto de Ciencias de la Computaci\'on.\\
  Buenos Aires, Argentina.}
  \email{adiazcaro@icc.fcen.fcen.uba.ar}
}
\begin{document}
\maketitle

\begin{abstract}
On the topic of probabilistic rewriting, there are several works studying both termination and confluence of different systems. While working with a lambda calculus modelling quantum computation, we found a system with probabilistic rewriting rules and strongly normalizing terms. We examine the effect of small modifications in probabilistic rewriting, affine variables, and strategies on the overall confluence in this strongly normalizing probabilistic calculus.
\end{abstract}

\section{Introduction}

When dealing with probabilistic lambda calculus, we can find two different sources of divergence.
\begin{itemize}
\item A \textit{single redex} may reduce in two different ways via a probabilistic reduction.

\item A term with \textit{multiple redexes} and no strategy, could be reduced in different ways.
\end{itemize}
For example, we can consider a lambda calculus extended with a coin $\coin$ reducing to $0$ or $1$ with probability $\frac 12$ each. Then, taking just the coin, we are in the first case of divergence. While taking, for example, $(\lambda x.\lambda y.yxx)\coin$, we are in the second case, since we can either beta reduce, or reduce the coin.

There is no point in trying to achieve confluence in the first case: the coin is non-confluent by design. However, we can analyse the branching paths and verify that the probability of reducing to a particular term stays the same, regardless of the reduction sequence. This is what we call \textit{probabilistic confluence}.

To study this kind of cases, Bournez and Kirchner developed the notion of PARS~\cite{BK02}, later refined in \cite{BG06}. Using the techniques described in~\cite{DCM17,Faggian} we can define the rewriting rules over the distributions. 

If we denote by $[(p_1,t_1),\dots,(p_n,t_n)]$ the probability distribution where $t_i$ has probability $p_i$, the possible reductions from the previous example are depicted in Figure~\ref{fig:noconfl}. The resulting distributions are not only different, but also divergent, since in the left branch, the probability to arrive, for example, to $\lambda y.y01$ is $0$, while it is one of the possible results in the right branch.

\begin{figure}[t]
  \centering
  \begin{tikzcd}[column sep=small]
    {[({\frac 12}, (\lambda x.\lambda y.yxx)0), ({\frac 12}, (\lambda x.\lambda y.yxx)1)]}\ar[d] & (\lambda x.\lambda y.yxx)\coin \ar[l]\ar[r]&  {[(1, \lambda y.y\coin\coin)]}\ar[d,"*",pos=1]\\
    {[({\frac 12}, \lambda y.y00), ({\frac 12}, \lambda y.y11)]} 
    &&\hspace{-50pt}
    {[({\frac 14}, \lambda y.y00),({\frac 14}, \lambda y.y01),({\frac 14}, \lambda y.y10),({\frac 14}, \lambda y.y11)]} 
  \end{tikzcd}
  \caption{Counterexample of confluence}
  \label{fig:noconfl}
\end{figure}

In this short paper we consider a simply typed lambda calculus extended with 
a coin, and show different possibilities for achieving some sort of confluence, without giving preference to any of them.

In Section~\ref{sec:calculus} we introduce the calculus to be studied, without any restrictions either in the rewriting rules, or in the typing rules. As we argued above, this naive definition is not confluent (cf.~Figure~\ref{fig:noconfl}), unless a strategy is defined (in which case it becomes trivially probabilistically confluent, as will be discussed in Section~\ref{sec:strategy}).
In Section~\ref{sec:circulito} we show that we can achieve confluence by internalising the probabilistic reductions in the terms. In Section~\ref{sec:affine} we show that we can achieve probabilistic confluence by taking an affine-linear type system. Then, in Section~\ref{sec:subaffine}, we show that we can relax the type system in an if-then-else branching, obtaining a probabilistic confluence result modulo a computational equivalence.

\section{The \texorpdfstring{$\lambprot$}{lambda coin} calculus}~\label{sec:calculus}
In this section we present 
$\lambprot$ (read ``lambda coin''), which is the
simply typed lambda calculus extended with booleans ($1$ and $0$), an if-then-else construction, and a coin $\coin$.
Terms are inductively defined by 
\[
  t := x\mid \lambda x.t \mid tt \mid 1\mid 0\mid \ite ttt\mid \coin
\]

The rewrite system is given in Table~\ref{tab:TRS}. The rules $t\topr[p]r$ mean that $t$ reduces with probability $p$ in one step to $r$, where the sum of probabilities of reducing one redex is $1$. In particular, every non-contextual rule has probability $1$, since there is only one rule per redex, except for the coin, which reduces with probability $\frac 12$ to $0$ and probability $\frac 12$ to $1$. If $t\topr[p_1]r_1\dots\topr[p_n]r_n$ we may write $t\topr[\prod_{i=1}^np_i]^*r_n$.

The reduction rules are intentionally as permissive as possible (even the branches of an if-then-else can be reduced) in order to analyse its (lack of) confluence. The type system is given in Table~\ref{tab:TS}.

\begin{table}[t]
  \[
    \begin{array}{r@{\ }l@{\qquad}r@{\ }l@{\qquad}r@{\ }l}
      (\lambda x.t)r &\topr t[r/x]
      &
      \ite 1tu &\topr t
      &
      \coin &\topr[{\frac 12}] 1
      \\
      &&
      \ite 0tu &\topr u
      &
      \coin &\topr[{\frac 12}] 0
    \end{array}
  \]
  \smallskip
  \[
    \infer{\lambda x.t\topr[p]\lambda x.r}{t\topr[p] r}
    \qquad
    \infer{ts\topr[p] rs}{t\topr[p] r}
    \qquad
    \infer{st\topr[p] sr}{t\topr[p] r}
    \qquad
    \infer{\ite t u v\topr[p]\ite r u v}{t\topr[p] r}
  \]
  \[
    \infer{\ite t u v\topr[p]\ite t r v}{u\topr[p] r}
    \qquad
    \infer{\ite t u v\topr[p]\ite t u r}{v\topr[p] r}
  \]
  \caption{Rewrite system for $\lambprot$.}
  \label{tab:TRS}
\end{table}

\begin{table}[t]
  \[
    A:= \mathbb{B}  \mid A\rightarrow A
  \]
  \[
    \infer[\mathsf{ax}]{\Gamma,x:A\vdash x:A}{} 
    \qquad
    \infer[\rightarrow_i]{\Gamma\vdash\lambda x.t:A\rightarrow B}
    	{\Gamma,x:A\vdash t:B}
	\qquad 
    \infer[\rightarrow_e]{\Gamma\vdash tr:B}
    	{\Gamma\vdash t:A\rightarrow B & \Gamma\vdash r:A}
  \]
  \[
    \infer[\mathsf{ax}_1]{\Gamma\vdash 1:\mathbb{B}}{}
    \qquad
    \infer[\mathsf{ax}_0]{\Gamma\vdash 0:\mathbb{B}}{}
    \qquad
    \infer[\mathsf{if}]{\Gamma\vdash\ite t u v:A}{\Gamma\vdash t:\mathbb{B}& \Gamma\vdash u:A & \Gamma\vdash v:A}
    \qquad
    \infer[\mathsf{ax}_{\coin}]{\Gamma\vdash \coin:\mathbb{B}}{} 
  \]
  \caption{Type system for $\lambprot$.}
  \label{tab:TS}
\end{table}

Strong normalization for this calculus follows trivially from Tait's proof for the simply typed $\lambda$-calculus, extended with booleans. The only reduction added is the probabilistic coin toss and it takes at most one step for each operator. Hence, using the rewriting over probabilistic distributions techniques from~\cite{DCM17}, we only need to show local confluence in order to achieve global confluence altogether. This is an adaptation of Newman's lemma for probabilistic calculi (see~\cite{DCM17} for a longer discussion about probabilistic confluence).

Clearly, $\lambprot$ is not confluent, as already seen in the introduction (see Figure~\ref{fig:noconfl}). It is easy to see that these distributions represent different results.

The divergence stems from three characteristics of the calculus:
  (1) Lack of a reduction strategy.
  (2) Probabilistic reductions.
  (3) Duplication of variables.

Removing just one of these elements renders the system confluent, however each modification comes with its own trade-off. We will examine each case, one by one, in the following section.

\section{Removing the divergence sources}
\subsection{Defining a strategy}\label{sec:strategy}
The definition of a strategy is the easiest modification. Choosing a reduction strategy makes all critical pairs disappear, since there is only one possible reduction rule to be applied for each term distribution. For example, in Figure~\ref{fig:noconfl} a \textit{call-by-name} strategy would take the right path, where \textit{call-by-value} would take the left one. 

Ultimately the choice lies in how to interpret the duplication of variables. Reducing via call-by-name means that the probabilistic event is duplicated. Whereas, a call-by-value strategy duplicates the outcome of said event (see~\cite{Dallago} for a discussion on this choice, and even an alternative combining both).

\subsection{Internalising the probabilities}\label{sec:circulito}
Following~\cite{lambdarho}, we can modify the reduction on $\lambprot$ to internalise the entire distribution of a term. In this particular case, every reduction has probability $1$, and the coin toss deterministically reduces to its probability distribution.
We write $t\oplus_{p}r$ for the probability distribution $[(p,t),(1-p,r)]$, $(t\oplus_p r)\oplus_q s$ for the probability distribution $[(qp,t),(q(1-p)r),(1-q,s)]$, etc. Then, we can consider the rewrite rule
\(
  \coin\to 0 \oplus_{\frac 12} 1
\).
This idea is common in non-probabilistic settings as well, e.g.,~\cite{AlvesDunduaFloridoKutsiaIGPL18}.

Following this approach brings confluence to the calculus, since every repetition of a probabilistic event rewrites to the same result, its distribution. For example, the two branches of Figure~\ref{fig:noconfl} become $(\lambda x.\lambda y.yxx)(0\oplus_{\frac 12}1)$ and $\lambda y.y\coin\coin$, both converging to $\lambda y.y(0\oplus_{\frac 12}1)(0\oplus_{\frac 12}1)$.

Although this is a valid solution, it forces us to consider every possible state of a program at the same time along with its probability of occurrence, making the management of the system more complex. Here we are dealing with a simple coin, but more involved calculi might have several different reductions, each with its own distribution. If not designed correctly, a language that holds every possible state in the probability distribution can easily become too cumbersome to be effective.

\subsection{Affine variables}\label{sec:affine}
The last reasonable solution is to restrict duplication. One way to do this is by controlling the appearance of variables at the type system level, with an affine type system, see Table~\ref{tab:AffineTS}.

\begin{table}[t]
  \[
    A:= \mathbb{B}  \mid A\rightarrow A
  \]
  \[
    \infer[\mathsf{ax}]{\Gamma,x:A\vdash x:A}{} 
    \qquad
    \infer[\rightarrow_i]{\Gamma\vdash\lambda x.t:A\rightarrow B}
    	{\Gamma,x:A\vdash t:B}
	\qquad 
    \infer[\rightarrow_e]{\Gamma,\Delta\vdash tr:B}
    	{\Gamma\vdash t:A\rightarrow B & \Delta\vdash r:A}
  \]
  \[
    \infer[\mathsf{ax}_1]{\Gamma\vdash 1:\mathbb{B}}{}
    \qquad
    \infer[\mathsf{ax}_0]{\Gamma\vdash 0:\mathbb{B}}{}
    \qquad
    \infer[\mathsf{if}]{\Gamma,\Delta_1,\Delta_2\vdash\ite t u v:A}{\Gamma\vdash t:\mathbb{B}& \Delta_1\vdash u:A & \Delta_2\vdash v:A}
    \qquad
    \infer[\mathsf{ax}_{\coin}]{\Gamma\vdash \coin:\mathbb{B}}{} 
  \]
  \caption{Affine type system for $\lambprot$
  (different contexts are considered to be disjoint).
  }
  \label{tab:AffineTS}
\end{table}

This type system solves the counterexample from Figure~\ref{fig:noconfl}, since the considered term has no type in this system. In particular, we can prove the following property:
\begin{lem}
  If $r\topr[p] s$ then $t[r/x]\topr[p] t[s/x]$.
\end{lem}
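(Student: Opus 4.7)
The plan is to proceed by structural induction on $t$, leveraging the contextual rewrite rules of Table~\ref{tab:TRS} to lift the single-step reduction $r \topr[p] s$ through each syntactic constructor. The affine discipline plays an essential role: because any variable can appear at most once in a well-typed term, the substitution $t[r/x]$ replaces at most one occurrence of $x$ in $t$, so a \emph{single} reduction step on that occurrence suffices to yield $t[s/x]$. One should read the statement as implicitly conditioning on $x \in \FV{t}$ (otherwise $t[r/x] = t[s/x]$ and no reduction step is needed, or is even applicable).

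For the base cases, when $t = x$ the equality $t[r/x] = r \topr[p] s = t[s/x]$ is precisely the hypothesis, and when $t \in \{y, 0, 1, \coin\}$ with $y \neq x$ we have $x \notin \FV{t}$, so the statement is vacuous in the sense above. For the inductive cases, suppose the result holds for strictly smaller terms. If $t = \lambda y.t'$, then $x \in \FV{t'}$ (with $y \neq x$ by $\alpha$-conversion), the IH gives $t'[r/x] \topr[p] t'[s/x]$, and the contextual rule for $\lambda$-abstraction closes the case. If $t = t_1 t_2$, affinity ensures $x$ occurs in exactly one of $t_1,t_2$; say $x \in \FV{t_1}$ and $x \notin \FV{t_2}$, so $t_2[r/x] = t_2[s/x]$, and applying the contextual rule for the left-hand side of an application to the IH on $t_1$ yields the desired step. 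The case $t = \ite{t_1}{t_2}{t_3}$ is analogous, using affinity to place $x$ in precisely one of the three subterms and invoking the corresponding contextual rule from Table~\ref{tab:TRS}.

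I do not expect any deep obstacle: this is a routine compatibility lemma. The only subtlety is that, without affinity, the statement would fail; e.g., in the unrestricted system the term $t = y\,x\,x$ would yield $t[\coin/x] \topr[{\frac 12}] y\,0\,\coin$ but not in one step to $t[0/x] = y\,0\,0$. What the proof is really making explicit is that affinity is the precise structural condition that makes the substitution lemma compatible with \emph{single-step} probabilistic reduction, which in turn is what will be needed when lifting local confluence arguments to the presence of $\beta$-redexes containing the coin.
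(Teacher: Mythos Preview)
The paper states this lemma without proof; it only remarks afterwards that the property fails in the unrestricted $\lambprot$. Your structural induction is the natural argument and is correct: the affine typing discipline of Table~\ref{tab:AffineTS} guarantees that $x$ occurs in at most one immediate subterm of each constructor, so a single application of the corresponding contextual rule from Table~\ref{tab:TRS} suffices. Your caveat that the statement must be read under the hypothesis $x\in\FV{t}$ (and implicitly that $t$ is typable in the affine system) is exactly the qualification the paper leaves tacit; without it the claim is ill-posed for a one-step relation. In short, your proposal supplies precisely the details the paper omits.
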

Notice that this property is not true in the unrestricted $\lambprot$. For example, while $\coin\topr[\frac 12]1$, we have $(\lambda y.yxx)[\coin/x]=\lambda y.y\coin\coin\topr[\frac 14]^*\lambda y.y11$.

The drawback in this approach is clear, there is a loss in expressivity. 
Of course, in some cases, this restriction is a desirable quality. For example, in quantum computing it may serve 
to avoid cloning qubits, a forbidden operation in quantum mechanics.

\section{Computational confluence with a sub-affine type system}\label{sec:subaffine}
The solution considered in Section~\ref{sec:affine} seems quite extreme. In particular, using the same variable in different branches of an if-then-else construction does not actually duplicate it, since only one of those branches will remain. However, changing the rule $\mathsf{if}$ from Table~\ref{tab:AffineTS} to $\mathsf{if}_s$ given by
\[
  \infer[\mathsf{if}_s]{\Gamma,\Delta\vdash\ite t u v:A}{
    \Gamma\vdash t:\mathbb{B}
    & 
    \Delta\vdash u:A 
    &
    \Delta\vdash v:A
  }
\]
breaks confluence anyway. We call this calculus ``sub-affine''.
Consider the following example. Let $\mathsf{not}=\lambda x.\ite x01$, then

\begin{center}
  \begin{tikzcd}[row sep=11pt,column sep=-50pt]
  & (\lambda x.\lambda y. \ite y {x}{\mathsf{not}~x})\coin \ar[dl]\ar[dr]& \\
 \left[
 \begin{tabular}{c}
   $(1/2, (\lambda x.\lambda y. \ite y {x}{\mathsf{not}~x})0);$\\
   $(1/2, (\lambda x.\lambda y. \ite y {x}{\mathsf{not}~x})1)$
 \end{tabular} 	
 \right] \ar[d]& &
 \left[(1, (\lambda y. \ite y {\coin}{\mathsf{not}~\coin}))\right]\ar[d]\\
 \left[
 \begin{array}[c]{l}
 	(\nicefrac 12, \lambda y. \ite y 01);\\
	(\nicefrac 12, \lambda y. \ite y 10) 
 \end{array}
 \right] & &
 \left[	
   \begin{array}[c]{l}
  	(\nicefrac 14, \lambda y. \ite y 00);\\
 	(\nicefrac 14, \lambda y. \ite y 01);\\
 	(\nicefrac 14, \lambda y. \ite y 10);\\
 	(\nicefrac 14, \lambda y. \ite y 11)
   \end{array}
 \right]
\end{tikzcd}
\end{center}

Note that terms in both distributions are in normal form. The two paths are syntactically divergent, however the resulting programs share the same behaviour under the same inputs. If we were to apply the resulting abstractions to $0$ or to $1$, both paths would yield $[(\nicefrac 12, 0); (\nicefrac 12, 1)]$. Therefore, these distributions are semantically confluent, they are not the same terms but they \textit{represent} the same function.

We can formalise this notion for the sub-affine calculus as follows. Let 
\(
  C := \lozenge\mid Cv
\)
be an elimination context, where $\lozenge$ is called ``placeholder'' and $v$ is a normal closed term. We write $C\langle t\rangle$ for $C[t/\lozenge]$. Notice that $C\langle t\rangle$ is a term. We say that $C$ is an elimination context of $A$, written $C^A$, if for all $\vdash t:A$, we have $\vdash C\langle t\rangle:\mathbb B$. That is, it applies $t$ until it reaches the basic type $\mathbb B$.

The computational equivalence is defined as follows.
\begin{defin}
Let $D_1 = [(p_i,t_i)]_i$ and $D_2 = [(q_j,r_j)]_j$ be two distributions of
terms, all closed of type $A$. Then, we say that these distributions are
computationally equivalent (notation $D_1\equiv D_2$) if for all $C^A$ we have
$C\langle t_i\rangle\topr[u_{ik}]^* b_{ik}$ and $C\langle r_j\rangle\topr[s_{jh}]^* c_{jh}$ such that the
$b_{ik}$ and $c_{jh}$ are in normal form, and  $[(p_iu_{ik},b_{ik})]_{ik} \sim
[(q_js_{jh},c_{jh})]_{jh}$
, where $\sim$ denotes the equality on distributions.
\end{defin}
The previous definition means that two distributions are computationally equivalent if by applying the resulting terms to all the possible inputs, they produce the same probability distribution of results. Notice that the definition is not assuming confluence. 

Then, we can prove that the confluence modulo computational equivalence of the sub-affine calculus (Theorem~\ref{thm:confmodulo}). We need the following two lemmas, which follow by straightforward induction.
\begin{lem}\label{lem:subst}
Let $y\not\in\FV{r}$, then $t[q/y][r/x]= t[r/x][q[r/x]/y]$.
\qed
\end{lem}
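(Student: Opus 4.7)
The plan is to proceed by structural induction on the term $t$, under the usual convention that $x \neq y$ (since otherwise the statement is vacuous/trivial) and that bound variables in $t$ are chosen distinct from $x, y$ and fresh for $q, r$, so that no substitution captures a free variable.

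The only nontrivial case is the base case where $t$ is a variable $z$, which splits into three subcases. If $z = x$, the left-hand side reduces to $r$ directly, while the right-hand side reduces to $r[q[r/x]/y]$, which equals $r$ by the hypothesis $y \notin \FV{r}$. If $z = y$, the left-hand side is $q[r/x]$, and the right-hand side is $y[r/x][q[r/x]/y] = y[q[r/x]/y] = q[r/x]$, using $x \neq y$. If $z$ is distinct from both $x$ and $y$, both sides are equal to $z$. The remaining base cases $t \in \{0, 1, \coin\}$ are immediate, since substitution acts as the identity on them.

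For the inductive cases, $t = \lambda z.t'$, $t = t_1 t_2$, and $t = \ite{t_1}{t_2}{t_3}$, the result follows by applying the induction hypothesis to each immediate subterm and reassembling. In the abstraction case, the Barendregt convention guarantees that $z$ does not interfere with either substitution, so pushing the substitutions through the binder is sound.

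There is no genuine obstacle: this is the classical substitution lemma of the lambda calculus, and the authors explicitly flag it as following by straightforward induction. The only point requiring minor care is the bookkeeping on bound variables in the abstraction case, which is handled by the standard freshness conventions on $\alpha$-equivalence classes.
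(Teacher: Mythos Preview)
Your proof is correct and matches the paper's approach: the paper simply states that the lemma follows by straightforward induction (marking it with \qed\ and no further argument), and what you have written is precisely that induction with the details filled in.
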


\begin{lem}\label{lem:distRed}
Let $t$ reduce to the distribution $D$, then $t[r/x]$ reduces to $D[r/x]$.
\qed
\end{lem}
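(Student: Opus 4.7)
The plan is to proceed by induction on the derivation of the single-step reduction $t \to D$, i.e., on the inference tree built from the rules of Table~\ref{tab:TRS}, with the implicit understanding that if the redex being fired inside $t$ is the coin, then $D$ is the two-point distribution $[(\nicefrac12,t_0),(\nicefrac12,t_1)]$ (where $t_b$ replaces that coin occurrence by $b$), and otherwise $D$ is a one-point distribution. Throughout, $D[r/x]$ denotes the pointwise substitution $[(p_i,s_i[r/x])]_i$ when $D=[(p_i,s_i)]_i$.

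First I would dispatch the base cases, corresponding to a root redex in $t$. For $t=(\lambda y.t_1)t_2$ reducing to $[(1,t_1[t_2/y])]$, assuming the standard Barendregt convention that $y\notin\FV{r}\cup\{x\}$, we have $t[r/x]=(\lambda y.t_1[r/x])(t_2[r/x])$, which reduces to $t_1[r/x][t_2[r/x]/y]$; Lemma~\ref{lem:subst} immediately identifies this with $t_1[t_2/y][r/x]$, as required. For $\ite 1uv\to u$ and $\ite 0uv\to v$ the commutation of substitution with these non-contextual rules is direct. For $t=\coin$ reducing to $[(\nicefrac12,0),(\nicefrac12,1)]$, we use that $\coin$, $0$, and $1$ contain no free variables, so both $t[r/x]$ and $D[r/x]$ equal their original counterparts, and the reduction is unchanged.

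For the inductive step, I would treat each of the six contextual rules uniformly. In each case $t$ has the form $K[t']$ for some single-hole context $K$ (namely $\lambda y.\square$, $\square s$, $s\square$, $\ite\square uv$, $\ite u\square v$, $\ite uv\square$), the reduction inside $t'$ produces a distribution $D'$, and $D$ is the distribution obtained by placing each support term of $D'$ back in the context, with probabilities preserved. By the induction hypothesis applied to $t'\to D'$, we have $t'[r/x]\to D'[r/x]$. Noting that $K[t'][r/x] = K[r/x][t'[r/x]]$ (where $K[r/x]$ denotes the context with $r$ substituted for $x$ in its body, again using the variable convention to avoid capture in the $\lambda y.\square$ case), the same contextual rule fires on $t[r/x]$ and yields exactly $K[r/x][D'[r/x]] = D[r/x]$.

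The only real subtlety, and hence the main obstacle, is bookkeeping rather than mathematical depth: one must be precise about what it means to apply a substitution to a distribution and to apply a context to a distribution, and check that these two pointwise liftings commute, so that the contextual step on the substituted term reassembles into exactly $D[r/x]$. Once those definitions are spelled out, the beta case is handled by Lemma~\ref{lem:subst} and every other case is a matter of pushing the substitution past an outermost constructor, which is why the paper labels the argument a straightforward induction.
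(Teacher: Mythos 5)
Your proposal is correct and is exactly the ``straightforward induction'' the paper invokes without writing out: the paper gives no explicit proof of Lemma~\ref{lem:distRed}, and your case analysis (base redexes handled via Lemma~\ref{lem:subst} and the variable convention, contextual rules handled by commuting the substitution with the one-hole context and the pointwise lifting to distributions) is the intended argument. No gaps.
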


\begin{thm}[Computational confluence]\label{thm:confmodulo}
  Let $\vdash t:A$ in the sub-affine calculus.
  If $t$ reduces to the distribution $D_1$ and to the distribution $D_2$, then $D_1\equiv D_2$.
\end{thm}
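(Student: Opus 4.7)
My plan is to reduce the theorem to local confluence modulo the computational equivalence $\equiv$, and then combine this with the strong normalization property noted in Section~\ref{sec:calculus} via the probabilistic analogue of Newman's lemma from~\cite{DCM17}. Concretely, it suffices to show that whenever $t\to D_1$ and $t\to D_2$ are one-step reductions, there exist $D_1',D_2'$ with $D_i\to^*D_i'$ and $D_1'\equiv D_2'$.

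I would then proceed by case analysis on the pair of contracted redexes. The easy cases are: (i) the same redex is reduced, yielding $D_1=D_2$; (ii) the redexes are disjoint, so the two reductions commute termwise, with Lemma~\ref{lem:distRed} handling the $\beta$-step bookkeeping; and (iii) nested but non-duplicating configurations, such as a $\beta$-redex whose argument or body itself contains the other redex, or an if-then-else with an inner reduction in its condition or in a branch, all of which close by performing the inner step after the outer one and invoking Lemmas~\ref{lem:subst} and~\ref{lem:distRed}.

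The critical case, and the main obstacle, is a $\beta$-redex $(\lambda x.s)r$ where $x$ occurs in both branches of an if-then-else inside $s$, paired with either $r$ itself being a coin toss or with an inner probabilistic redex of $r$. The ``reduce $r$ first, then $\beta$'' path yields a distribution in which the two copies of the substituted result are synchronized (they carry the same probabilistic outcome), while the ``$\beta$ first, then reduce the two copies'' path yields independent outcomes in the two copies. As the counterexample at the start of Section~\ref{sec:subaffine} illustrates, these distributions are not syntactically joinable, so I must argue $\equiv$-equivalence directly.

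For this, I would fix an arbitrary elimination context $C^A$ and evaluate $C\langle t'\rangle$ for each $t'$ in the two competing distributions. Since $C\langle t'\rangle:\mathbb{B}$ and the system is strongly normalizing, every such evaluation terminates; crucially, the duplicating if-then-else inside $t'$ eventually receives a concrete boolean in its condition (supplied by the values in $C$), selecting exactly one branch and discarding the other. In the ``independent'' distribution the discarded branch's copy of the probabilistic event contributes nothing observable, and marginalizing it out recovers exactly the ``synchronized'' distribution. Propagation of probabilistic reductions through the applications supplied by $C$ relies on Lemma~\ref{lem:distRed}, and a careful accounting of probabilities across both possible branch selections yields the required equality of boolean distributions, hence $D_1\equiv D_2$ in the critical case, completing the proof.
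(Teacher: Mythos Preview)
Your proposal is correct and follows essentially the same route as the paper: reduce to local confluence via critical pairs, close all but the $\beta$-redex-with-duplicated-argument case syntactically using Lemmas~\ref{lem:subst} and~\ref{lem:distRed}, and handle that last case by arguing that in any elimination context $C^A$ the relevant if-then-else eventually resolves, so only one copy of the duplicated probabilistic event is observed. The paper makes one structural point more explicit than you do, namely an induction on the number $n$ of if-then-else constructions sharing $x$ (since in the sub-affine system $x$ can occur up to $2^n$ times), but your ``eventually receives a concrete boolean and the discarded branch is marginalized out'' is exactly this argument in prose form.
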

\begin{proof}
  We only consider the six critical pairs, 
  four derived from the if-then-else construction and two from the classical lambda calculus.
  Non-critical pairs are trivially probabilistically confluent.
\begin{center}
  \begin{tikzcd}[column sep=-1em,row sep=1ex]
    & \ite{1}{r}{s} \ar[dl]\ar[dr] & \\
    {[(p_i, \ite{1}{r_i}{s})]_i}\ar[dr,dashed]  && {[(1, r)]}\ar[dl,dashed] \\
    & {[(p_i, r_i)]_i}  &
  \end{tikzcd}
  \hfill
  \begin{tikzcd}[column sep=-1em,row sep=1ex]
    & \ite{1}{s}{r} \ar[dl]\ar[dr] & \\
    {[(p_i, \ite{1}{s}{r_i})]_i}\ar[dr,dashed] &&  {[(1, s)]}\ar[dl,dashed]\\
    & {[(p_i, s)]_i} \sim [(1, s)] 
  \end{tikzcd}
\end{center}

The symmetrical cases with $0$ close in a similar way. 
The fifth critical pair closes by Lemma \ref{lem:distRed}.
\begin{center}
  \begin{tikzcd}[row sep=1ex]
    {[(p_i, (\lambda x.t_i)r)]_i}\ar[dr,dashed]
    & (\lambda x.t)r\ar[l]\ar[r]& 
    {[(1, t[r/x])]}\ar[dl,dashed]\\
    & {[(p_i, t_i[r/x])]_i} &
  \end{tikzcd}
\end{center}

The last critical pair requires a more thorough analysis. 
\begin{center}
  \begin{tikzcd}[row sep=1ex]
    {[(p_i, (\lambda x.t)r_i)]_i}\ar[dr,dashed] 
    & (\lambda x.t)r\ar[l]\ar[r]& 
    {[(1, t[r/x])]}\\
    &{[(p_i,t[r_i/x])]_i} & 
  \end{tikzcd}
\end{center}

We must prove that for all $C^A$ and for all $i$, there exists $b_{ij}$ with $C\langle t[r_i/x]\rangle\topr[q_{ij}]^* b_{ij}$ and $c_k$ with $C\langle t[r/x]\rangle\topr[s_{k}]^* c_k$ such that
\(
  [p_iq_{ij},b_{ij}]_{ij} \sim [s_k,c_k]_k
\).
It is enough to take only one path to $[p_iq_{ij},b_{ij}]_{ij}$ and to $[s_k,c_k]_k$ according to the definition of $\equiv$.
If $x\notin\FV{t}$, then 
    \(
      C\langle t[r/x]\rangle
      =
      C\langle t\rangle
      =
      C\langle t[r_i/x]\rangle
    \).
  If $x$ appears once in $t$, 
    \(
      C\langle t[r/x]\rangle
      \topr[p_i] C\langle t[r_i/x]\rangle
    \).
    If $x$ appears more than once in $t$, it appears at most $2^n$ times (where $n$ is the number of if-then-else constructions), so we can proceed by induction on $n$.
    \begin{itemize}
      \item 
	If $n=1$, then there is one if-then-else, say $\ite c{s_1}{s_2}$ and $x$ appears both in $s_1$ and in $s_2$. Therefore, there are three cases:
	\begin{itemize}

	  \item $c$ is $0$ or $1$, then, for $j=0$ or $j=1$, we have
	    \(
	      C\langle t[r/x]\rangle
	      \topr[1]^*
	      C\langle s_j[r/x]\rangle
	      \topr[p_i]^*
	      C\langle s_j[r_i/x]\rangle
	    \).
	    Notice that $C\langle t[r_i/x]\rangle\topr[1] C\langle s_j[r_i/x]\rangle$.
	  \item $c\topr[q_j]^* j$ with $j=0,1$, so we have
	    \(
	      C\langle t[r/x]\rangle
	      \topr[q_j]^*
	      C\langle s_j[r/x]\rangle
	      \topr[p_i]^*
	      C\langle s_j[r_i/x]\rangle
	    \).
	    That is, we arrived to the distribution $[(q_jp_i,s_j[r_i/x])]_{ij}$.
	    Notice that since $C\langle t[r_i/x]\rangle\topr[q_j]
	    C\langle s_j[r_i/x]\rangle$, we arrive to the same distribution in the right branch of this critical pair.
	  \item $c$ is an open term, hence not reducing to a constant $0$ or $1$.
	    In such a case, since $t[r/x]$ is a closed term, it means that the if-then-else construction in $t$ is under a lambda-abstraction, therefore, we must beta-reduce first, either with the argument in $t$, or with an external argument given by the context $C$. We can repeat the same process until we get one of the previously treated cases (that is, at some point, the if-then-else becomes a redex).
	\end{itemize}
      \item If $n>1$, we proceed as the previous case, reducing the if-then-else first, which reduces $n$ and so the induction hypothesis applies.\qedhere
    \end{itemize}
\end{proof}

\section{Conclusion}
In this paper we have analysed the different possibilities to transform a simple probabilistic calculus into a (probabilistically / computationally) confluent calculus. The main contribution on this note is Theorem~\ref{thm:confmodulo}, which proves that we can relax the affinity restriction on ``non-interfering'' paths. This technique has been considered in the first author master's thesis~\cite{Romero20} to prove the computational confluence of the quantum lambda calculus $\lambda_\rho$~\cite{lambdarho}.

\bibliographystyle{eptcs}
\bibliography{biblio}
\end{document}